\newcommand{\TODO}[1]{
\typeout{WARNING!!! there is still a TODO left}
\marginpar{\textbf{!TODO: }\emph{#1}}
}
\newcommand{\TODO}[1]{}
\newenvironment{todo}[1]{\noindent\rule{.3\textwidth}{1pt}\TODO{#1}\\}{\\\rule{.3\textwidth}{1pt}}
\newcommand{\NOTE}[1]{
\typeout{WARNING!!! there are still DRAFT NOTES left}
\marginpar{!DRAFT}\emph{\textbf{DRAFT NOTES:} #1}
}
\newcommand{\NOTE}[1]{}
\newtheorem{theorem}{Theorem}
\newtheorem{lemma}{Lemma}
\newtheorem{definition}{Definition}
\newtheorem{corollary}{Corollary}
\begin{document}

\title{Optimal accessing and non-accessing structures\\ for graph protocols}
\author{Sylvain Gravier\textsuperscript{\small 1,2,}\footnote{Sylvain.Gravier@ujf-grenoble.fr}~, J\'er\^ome Javelle\textsuperscript{\small 3,}\footnote{Jerome.Javelle@imag.fr}~, Mehdi Mhalla\textsuperscript{\small 1,3,}\footnote{Mehdi.Mhalla@imag.fr}~, Simon Perdrix\textsuperscript{\small 1,3,}\footnote{Simon.Perdrix@imag.fr}}
\date{\small \textsuperscript{1} CNRS\\  \textsuperscript{2} Institut Fourier, University of Grenoble, France\\ \textsuperscript{3} LIG, University of Grenoble, France  }

\maketitle

{\bf Keywords:} Complexity, Graph Theory, NP-Completeness
\\
\begin{abstract}
An accessing set in a  graph is a subset $B$ of vertices such that $\exists D\subseteq B$, $\forall v\in V\setminus B, |\mathcal{N}(v)\cap D|=0 \mod 2$. 
In this paper, we introduce new bounds on the minimal size $\kappa'(G)$ of an accessing set,  and on the maximal size $\kappa(G)$ of a non-accessing set of a graph $G$. We show strong connections with perfect codes and give explicitly $\kappa(G)$ and $\kappa'(G)$ for several families of graphs. Finally, we show that the corresponding decision problems are NP-Complete. 

\end{abstract}

\section{Introduction}

In the field of quantum information theory, a very powerful tool has emerged: graph states, that are quantum entangled states which can be represented by undirected graphs \cite{HEB}.
Graph states provide a universal model of computation \cite{BRQC} \cite{BRcomp}, and  are also useful  to build several kinds of protocols \cite{blind} \cite{MS}.

The graph state formalism gives rise to strong connexions with graph theory.
In several cases, solving some problems in quantum information theory and quantum cryptography can be reduced to graph problems.

For instance, the complexity of the preparation of graph states strongly depends on the minimal degree up to local complementation on graphs \cite{HMP}, which has been proven to be the size of the smallest $D \cup Odd(D)$ where $D$ is a non-empty set of vertices and its odd neighborhood $Odd(D) = \{ v \in V , |\mathcal{N}(v)\cap D|=1\bmod{2} \}$.
A measure of entanglement is the rank width  of the graph \cite{Vdn}, the depth complexity of a quantum measurement based computation is characterized by a flow in the underlying graph \cite{flow}.

The graph parameters we investigate in this paper come from Quantum Secret Sharing (QSS) protocols using graphs \cite{MS}  where a quantum secret is shared among a set  of players and only some subsets of players can recover the secret .

Given an undirected graph $G$ of order $n$, we investigate three quantities that are strongly related to these protocols: $\kappa'(G)$ which is the smallest set containing a set of odd cardinality and its odd neighborhood, $\kappa(G)$ which is the largest set that is the odd neighborhood of a disjoint set and  $\kappa_Q(G)$ which is the maximum of $\kappa(G)$ and  $n-\kappa'(G)$ (see section \ref{Definitions} for a formal definition).

These quantities can be used to build threshold $(n,k)$ (classical or quantum) secret sharing protocols (the notion of threshold secret sharing scheme appeared in \cite{Shamir} and \cite{Blakley}), which  are protocols where among $n$ players, any set of at least $k$ players  can recover the secret, whereas any set of less than $k$ players cannot get any information about the secret. They can also be used to build "ramp secret sharing schemes" \cite{ramp} where any sufficiently large set can recover the secret whereas all small enough sets have no information about the secret (other sets of intermediate size, however, might be able to get some partial information about the secret).

For instance, in \cite{QSS_GS}, using graph based protocols where each player receives a random key and a secret encoded with the keys of his neighbors \cite{GSS},
 it has been proven, that given a graph $G$ with $n$ vertices, it is possible to define a threshold QSS scheme $(n,\kappa_Q(G)+1)$. The main idea is that $\kappa(G)$ is related to the largest set that cannot recover a classical secret, $\kappa'(G)$ to the smallest set that can recover a classical secret and $\kappa_Q(G)$ to the largest set that cannot recover a quantum secret.

First, we investigate the evolution of $\kappa$ and $\kappa'$ when we take multiple copies of a graph.
Then  we present a family of graphs  from which it is possible to build a QSS protocol with a good threshold: $(n, n-\sqrt{n})$.
This threshold is given by the value of $\kappa_Q$.
Finally, we  provide some general bounds on these quantities and prove the NP-completeness of the corresponding decision problems using reductions to the perfect code problem. 

\section{Definitions}
\label{Definitions}

\begin{definition}
\label{accessing}
$B\subseteq V(G)$ is \emph{accessing} if $\exists D\subseteq B$ such that $ Odd(D)\subseteq B$ and $|D| = 1\bmod{2}$, where $Odd(D)=\{v\in V(G)~\big |~|\mathcal N(v)\cap D|=1 \mod 2\}$ is the odd-neighborhood of $D$.  
\end{definition}

\begin{lemma}[\cite{QSS_GS}]
\label{non-accessing}
$B\subseteq V(G)$ is not accessing $\Leftrightarrow$ $\exists C\subseteq V\setminus B$ s.t. $B \subseteq Odd(C)$
\end{lemma}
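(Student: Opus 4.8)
The plan is to recast the statement as a fact in $\mathbb{F}_2$-linear algebra and deduce it from the duality $(\ker A)^{\perp}=\operatorname{im}(A^{T})$. I identify every vertex subset $S\subseteq V(G)$ with its characteristic vector $\mathbf 1_S\in\mathbb{F}_2^{V}$, let $\Gamma\in\mathbb{F}_2^{V\times V}$ be the adjacency matrix of $G$, and note that $Odd(S)$ is exactly the support of $\Gamma\mathbf 1_S$, while $|S|\bmod 2=\mathbf 1_V^{T}\mathbf 1_S$. I split $V=B\sqcup\overline B$ with $\overline B:=V\setminus B$ and write $\Gamma$ in block form with respect to this partition, calling $M\in\mathbb{F}_2^{B\times\overline B}$ the off-diagonal block recording the edges between $B$ and $\overline B$; thus $\Gamma\mathbf 1_D=\bigl(\Gamma_B\, d,\; M^{T}d\bigr)$ for $D\subseteq B$ with $d:=\mathbf 1_D$, and $\Gamma\mathbf 1_C=\bigl(M\,c,\; \Gamma_{\overline B}\,c\bigr)$ for $C\subseteq\overline B$ with $c:=\mathbf 1_C$.

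First I would translate ``accessing'' into this language. For $D\subseteq B$, the condition $Odd(D)\subseteq B$ is precisely the vanishing of the $\overline B$-block of $\Gamma\mathbf 1_D$, i.e.\ $M^{T}d=0$; and $|D|$ odd means $\mathbf 1_B^{T}d=1$ (legitimate since $d$ is supported on $B$). Hence $B$ is accessing iff there exists $d\in\ker M^{T}$ with $\mathbf 1_B^{T}d=1$, i.e.\ iff the linear form $d\mapsto\mathbf 1_B^{T}d$ is not identically zero on $\ker M^{T}$, i.e.\ iff $\mathbf 1_B\notin(\ker M^{T})^{\perp}$.

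Next I would invoke the identity $(\ker A)^{\perp}=\operatorname{im}(A^{T})$, valid over any field and in particular over $\mathbb{F}_2$ (it follows from $(\operatorname{im}A^{T})^{\perp}=\ker A$ together with $(W^{\perp})^{\perp}=W$; the only point to check is that although the bilinear form $\langle x,y\rangle=\sum_i x_iy_i$ on $\mathbb{F}_2^{n}$ has isotropic vectors, it is still a non-degenerate pairing, so $\dim W+\dim W^{\perp}=n$ and the double-perp identity still hold). Applying it with $A=M^{T}$ gives $(\ker M^{T})^{\perp}=\operatorname{im}(M)$, so $B$ is accessing iff $\mathbf 1_B\notin\operatorname{im}(M)$. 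Negating, $B$ is not accessing iff $\mathbf 1_B\in\operatorname{im}(M)$, i.e.\ iff $Mc=\mathbf 1_B$ for some $c\in\mathbb{F}_2^{\overline B}$. Writing $c=\mathbf 1_C$ with $C\subseteq\overline B=V\setminus B$, the vector $Mc$ is exactly the $B$-block of $\Gamma\mathbf 1_C$, so $Mc=\mathbf 1_B$ says precisely that every vertex of $B$ lies in $Odd(C)$, i.e.\ $B\subseteq Odd(C)$, which is the claimed equivalence.

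I expect the only real subtlety to be the $\mathbb{F}_2$-orthogonality bookkeeping — making sure the degeneracy of the form does not invalidate $(\ker M^{T})^{\perp}=\operatorname{im}(M)$ — together with keeping straight which block of the partitioned $\Gamma$ is indexed by $B$ versus $\overline B$; everything else is routine translation. One could alternatively avoid the transpose-duality by a direct dimension count comparing $\{\,D\subseteq B: Odd(D)\subseteq B\,\}$ with $\{\,Odd(C)\cap B: C\subseteq\overline B\,\}$, but the route above seems shortest.
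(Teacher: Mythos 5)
Your proof is correct. Note that the paper does not prove this lemma itself but imports it from \cite{QSS_GS}; your argument --- encoding subsets as vectors over $\mathbb{F}_2$, reducing ``accessing'' to the non-vanishing of the form $d\mapsto \mathbf 1_B^{T}d$ on $\ker M^{T}$ for the cut matrix $M$ between $B$ and $V\setminus B$, and then applying $(\ker M^{T})^{\perp}=\operatorname{im}(M)$ (valid over $\mathbb{F}_2$ because the standard pairing, though it has isotropic vectors, is non-degenerate) --- is exactly the standard proof of this equivalence, and all the bookkeeping (blocks, parity, the direction of the transpose) checks out.
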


\begin{definition}
For a given graph $G$, let
\begin{align*}
\kappa(G) &= \max_{B~\text{not accessing}} |B| \\
\kappa'(G) &=  \min_{B~\text{accessing}} |B| \\
\kappa_Q(G) &= \max \big( \kappa(G), n-\kappa'(G) \big)
\end{align*}
\end{definition}

Definition \ref{accessing} and Lemma \ref{non-accessing} are linked to QSS schemes using graph states \cite{QSS_GS}.
Indeed, accessing sets on graphs represent sets of players who can recover the secret, whereas non-accessing sets correspond to sets of players who have no information about the secret.
From any graph, we can build a QSS protocol for which any set of players of size $> \kappa_Q(G)$ can recover a quantum secret, and any set of players of size $\leq \kappa_Q(G)$ can not get any information about the secret quantum state.

\section{ $\kappa$ and $\kappa'$ for some graph constructions}

First we investigate the evolution of $\kappa$ and $\kappa'$ for the family of graphs $G^r$ consisting in $r$ disconnected copies of  a graph $G$. This construction is used in the NP-completeness proof of the decision problem associated with $\kappa_Q$ (see theorem \ref{NPkappaQ}).

\begin{lemma}For any graph $G$ and any $r>0$,
\begin{align}
\kappa(G^r) &= r.\kappa(G) \label{kappa_r} \\
\kappa'(G^r) &= \kappa'(G) \label{kappa_p_r}
\end{align}
where $G^1=G$ and $G^{r+1}=G\cup G^r$.
\end{lemma}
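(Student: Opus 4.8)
The plan is to fix the partition $V(G^r)=V_1\sqcup\cdots\sqcup V_r$ into the vertex sets of the $r$ copies $G_1\cong\cdots\cong G_r\cong G$ and to use the single structural fact that $G^r$ has no edge between distinct copies. This gives, for every $S\subseteq V(G^r)$ and every $i$, the decomposition
\[
Odd(S)\cap V_i \;=\; Odd_{G_i}(S\cap V_i),
\]
because for $v\in V_i$ we have $\mathcal N(v)\cap S=\mathcal N(v)\cap(S\cap V_i)$, whereas for $v\notin V_i$ we have $\mathcal N(v)\cap(S\cap V_i)=\emptyset$. I would build everything below on this identity (and never on any monotonicity of $Odd$, which fails), together with Definition \ref{accessing} and Lemma \ref{non-accessing}.

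For equation \eqref{kappa_p_r}, the inequality $\kappa'(G^r)\le\kappa'(G)$ is immediate: take a minimum accessing set $B$ of $G$ with witness $D\subseteq B$, $|D|$ odd, $Odd(D)\subseteq B$, and embed copies of $B,D$ in $V_1$; the identity gives $Odd(D)=Odd_{G_1}(D)\subseteq B$, so this copy of $B$ is accessing in $G^r$. For the reverse inequality I would start from an arbitrary accessing set $B$ of $G^r$ with witness $D$; since $|D|=\sum_i|D\cap V_i|$ is odd, some $i_0$ has $|D\cap V_{i_0}|$ odd, and writing $D'=D\cap V_{i_0}$, $B'=B\cap V_{i_0}$ the identity gives $Odd_{G_{i_0}}(D')=Odd(D)\cap V_{i_0}\subseteq B'$ while $D'\subseteq B'$ and $|D'|$ is odd, so $B'$ is accessing in $G_{i_0}\cong G$ and $|B|\ge|B'|\ge\kappa'(G)$.

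For equation \eqref{kappa_r} I would pass, via Lemma \ref{non-accessing}, to certificates: a set $B$ is non-accessing iff its complement contains a set $C$ with $B\subseteq Odd(C)$. For $\kappa(G^r)\ge r\,\kappa(G)$, take a maximum non-accessing $B_0$ of $G$ with certificate $C_0\subseteq V(G)\setminus B_0$, and let $B,C$ be the disjoint unions of one copy of $B_0$, resp.\ $C_0$, in each $V_i$; then $C\subseteq V(G^r)\setminus B$ and $B\cap V_i\subseteq Odd_{G_i}(C\cap V_i)=Odd(C)\cap V_i$ for each $i$, so $B\subseteq Odd(C)$ and $B$ is non-accessing by Lemma \ref{non-accessing}. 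For $\kappa(G^r)\le r\,\kappa(G)$, take a maximum non-accessing $B$ of $G^r$ with certificate $C$; for each $i$, $C\cap V_i\subseteq V_i\setminus(B\cap V_i)$ and $B\cap V_i\subseteq Odd(C)\cap V_i=Odd_{G_i}(C\cap V_i)$, so $B\cap V_i$ is non-accessing in $G_i\cong G$ and has size at most $\kappa(G)$; summing over $i$ gives $|B|\le r\,\kappa(G)$.

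I do not anticipate a real obstacle here: once the decomposition identity is in place, both equalities are bookkeeping. The only spot that needs care is the parity step for equation \eqref{kappa_p_r} — namely that an odd-size witness necessarily has odd size in at least one copy, and that on that copy $Odd$ sees only the witness's restriction to it — plus the discipline of invoking the exact decomposition of $Odd$ rather than any (false) monotonicity.
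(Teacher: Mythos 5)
Your proof is correct and follows essentially the same approach as the paper: everything rests on the fact that $Odd$ decomposes across the disconnected copies, with sets embedded into or restricted to individual copies. The only (minor) difference is that in both upper-bound directions you argue directly from a witness/certificate of the given set — using the nice observation that an odd-cardinality witness must have odd restriction to some copy — where the paper argues by contrapositive via Lemma \ref{non-accessing}; your explicit statement of the identity $Odd(S)\cap V_i = Odd_{G_i}(S\cap V_i)$ makes the bookkeeping cleaner but does not change the substance.
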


\begin{proof}~

\begin{itemize}
\item $[\kappa(G^r) = r.\kappa(G) \label{kappa_r}]$:
Let $B$ be a non-accessing set in $G$ of size $\kappa(G)$. $B$ is in the odd neighborhood of some $C \subseteq V(G)$. Then the set $B_r \subseteq V(G^r)$ which is the union of sets $B$ in each copy of the graph $G$ is in the odd neighborhood of $C_r \subset V(G^r)$, the union of sets $C$ of each copy of $G$.
Therefore $B_r$ is non-accessing and $\kappa(G^r) \geq r.\kappa(G)$.
Now if we pick any set $B_0 \subseteq V(G^r)$ verifying $|B_0|> r.\kappa(G)$, there exists a copy of $G$ such that $|B_0 \cap G| > \kappa(G)$.
Therefore $B_0$ is an accessing set and $\kappa(G^r) \leq r.\kappa(G)$.

\item $[\kappa'(G^r) = \kappa'(G) \label{kappa_p_r}]$:
Let $B$ be an accessing set in $G$ of size $\kappa'(G)$. If we consider $B$ as a subset of $V(G^r)$ contained in one copy of the graph $G$, $B$ is an accessing set in $G^r$.
Therefore $\kappa'(G^r) \leq \kappa'(G)$.
If we pick any set $B \subseteq V(G^r)$ verifying $|B|< \kappa'(G)$, its intersection with each copy of $G$ verifies $|B \cap G| < \kappa'(G)$.
Thus, each such intersection is in the odd neighborhood of some $C_i$. So $B$ is in the odd neighborhood of $\bigcup_{i=1..r}C_i$. 
Consequently, $B_0$ is a non-accessing set in $G^r$ and $\kappa'(G^r) \geq \kappa'(G)$.
\end{itemize}
\end{proof}

Now we exhibit a family of graphs on $n$  vertices with $\kappa_Q= n-\sqrt{n}$. Let $G_{p,q}$ be the complete $q$-partite graph where each independent set is of size $p$ ($G_{p,q}$ is equivalently the complement of $q$ copies of $K_p$). $G$ has $n=pq$ vertices.
\begin{lemma}
If $q=1\bmod{2}$ then $\kappa(G_{p,q})=n-p$ and $\kappa'(G_{p,q})=q$.
\end{lemma}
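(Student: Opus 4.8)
The plan is to compute $\kappa$ and $\kappa'$ of $G_{p,q}$ directly from the definitions, exploiting the highly symmetric structure of the complete $q$-partite graph with parts $V_1,\dots,V_q$ each of size $p$. Throughout I will use the key fact that for $v$ in part $V_i$, the neighborhood $\mathcal{N}(v)$ is exactly $V\setminus V_i$, so for any set $D$ the parity $|\mathcal{N}(v)\cap D|\bmod 2$ equals $\big(|D|-|D\cap V_i|\big)\bmod 2$. Hence $v\in Odd(D)$ iff $|D\cap V_i|\not\equiv |D|\pmod 2$; in particular whether $v\in Odd(D)$ depends only on which part $v$ lies in, not on $v$ itself. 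This immediately tells us that $Odd(D)$ is always a union of whole parts $V_i$, which is the structural backbone of the whole argument.

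For the upper bound $\kappa'(G_{p,q})\le q$: I would exhibit an explicit small accessing set. Take one representative vertex from each part, $B=\{w_1,\dots,w_q\}$ with $w_i\in V_i$, and set $D=B$. Then $|D|=q$ is odd by hypothesis, and for $w_j\in V_j$ we have $|D\cap V_j|=1$ while $|D|=q$ is odd, so $|D|-|D\cap V_j|=q-1$ is even, giving $w_j\notin Odd(D)$; for a vertex outside $B$, say in $V_i$, we still have $|D\cap V_i|=1$, so again it is not in $Odd(D)$. Thus $Odd(D)=\emptyset\subseteq B$, so $B$ is accessing and $\kappa'\le q$. For the matching lower bound $\kappa'(G_{p,q})\ge q$ (equivalently $\kappa(G_{p,q})\ge n-p$ would also need proving, and the two bounds are the real content): suppose $B$ is accessing via some odd $D\subseteq B$ with $Odd(D)\subseteq B$. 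I would argue that $D$ must meet at least $q$ parts. Indeed, if $D$ misses part $V_i$ entirely, then $|D\cap V_i|=0$, and since $|D|$ is odd, every vertex of $V_i$ lies in $Odd(D)\subseteq B$, forcing $V_i\subseteq B$; I then want to derive a contradiction or a counting bound. More carefully: if $D$ misses some part, pick any missed part $V_i$; all of $V_i$ is in $B$. Now consider whether $D$ meets every non-missed part — one shows using the parity condition that $D$ being odd and $Odd(D)\subseteq B$ forces, for each part $V_j$ with $|D\cap V_j|$ of the "wrong" parity, the whole part into $B$, and a careful count of $|B|\ge |D|+|Odd(D)\setminus D|$ bottoms out at $q$. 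The cleanest route is probably: let $t$ be the number of parts $D$ meets; if $t<q$ then at least one part is fully in $Odd(D)\subseteq B$, and combined with $|D|\ge t$ one pushes $|B|\ge q$ anyway, while if $t=q$ then trivially $|B|\ge|D|\ge q$.

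For $\kappa(G_{p,q})=n-p$: the non-accessing set $B=V\setminus V_i$ works because, taking $C=V_i$ (which is disjoint from $B$), we get for $v\in V_j$ with $j\ne i$ that $|\mathcal{N}(v)\cap C|=|C|=p$ — wait, this needs the parity of $p$, so instead I would take $C$ to be a single vertex of $V_i$: then $B=V\setminus V_i\subseteq Odd(C)$ since each vertex outside $V_i$ is adjacent to that one vertex, hence in its odd neighborhood, and by Lemma \ref{non-accessing} $B$ is non-accessing; so $\kappa\ge n-p$. Conversely any $B$ with $|B|>n-p$ misses at most $p-1$ vertices, so it contains a full part $V_i$ plus at least one vertex from every other part; I would show such a $B$ is accessing by building a suitable odd $D$ inside it — essentially the same $\{w_1,\dots,w_q\}$ construction, choosing the representatives inside $B$, which is possible precisely because $B$ hits every part. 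Since $q$ is odd this $D$ has odd size and $Odd(D)=\emptyset\subseteq B$, so $B$ is accessing, giving $\kappa\le n-p$.

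The main obstacle is the lower bound $\kappa'(G_{p,q})\ge q$: the upper-bound constructions and the $\kappa$ computation are almost immediate once the "neighborhood is the complement of your part" observation is in hand, but ruling out accessing sets of size $<q$ requires the careful parity case analysis on how $D$ distributes among the parts and on which parts are forced wholesale into $B$ by the condition $Odd(D)\subseteq B$. I would organize that step around the quantity $t=\#\{i: D\cap V_i\ne\emptyset\}$ and the observation that a part with $|D\cap V_i|\equiv|D|\pmod 2$ (in particular any part disjoint from $D$, since $|D|$ is odd) is entirely contained in $Odd(D)$, then bound $|B|$ from below in the two cases $t=q$ and $t<q$.
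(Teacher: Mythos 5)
Your computation of $\kappa(G_{p,q})$ and the upper bound $\kappa'(G_{p,q})\le q$ match the paper's proof: the odd clique $D$ consisting of one representative per part, with $Odd(D)=\emptyset$, and the fact that the complement of a part $V_i$ equals $Odd(\{v\})$ for any $v\in V_i$. (One small slip: a set $B$ with $|B|>n-p$ need not contain a full part --- the at most $p-1$ missing vertices can be spread so that every part loses one --- but your argument only uses that $B$ meets every part, which does follow from $|V\setminus B|<p$, so nothing breaks.) Where you genuinely diverge is the lower bound $\kappa'(G_{p,q})\ge q$. The paper disposes of it in one line via Lemma \ref{non-accessing}: if $|B|<q$ then $B$ misses some part $V_i$ entirely, and for any $v\in V_i$ we have $B\subseteq V\setminus V_i=\mathcal N(v)=Odd(\{v\})$ with $\{v\}\subseteq V\setminus B$, so $B$ is not accessing. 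You instead work directly with Definition \ref{accessing} and count. That route can be made to work, but as written it has a soft spot: in the case $t<q$ you invoke only \emph{one} part fully contained in $Odd(D)$, which yields $|B|\ge p+t$, and $p+t\ge q$ can fail (e.g.\ $p=1$, $t=1$, $q=5$). You need \emph{all} $q-t$ parts disjoint from $D$: each such part is contained in $Odd(D)\subseteq B$ (its vertices meet $D$ in $0$ elements while $|D|$ is odd) and is disjoint from $D\subseteq B$, so $|B|\ge (q-t)p+|D|\ge (q-t)p+t\ge (q-t)+t=q$. With that correction your argument is complete; it is just considerably heavier than the Lemma \ref{non-accessing} shortcut, which is worth internalizing since the same trick recurs in the even-$q$ lemma and in the general bounds of the paper.
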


\begin{proof}~

\begin{itemize}
\item $[\kappa(G)\ge n-p]$: The subset $B$ composed of all the vertices but a maximal independent set (MIS) -- i.e. an independent set of size $p$ -- is in the odd neighborhood of each vertex in $V\setminus B$. So, according to Lemma \ref{non-accessing}, $B$ of size $(q-1)p=n-p$ is non accessing, as a consequence $\kappa(G)\ge n-p+1$.

\item $[\kappa(G)\le n-p]$: Any set $B$ s.t. $|B|>n-p$ contains at least one vertex from each of the $q$ MIS, i.e. a clique of size $q$. Let $D\subseteq B$ be such a clique of size $|D|=q = 1\bmod{2}$. Notice that  $Odd(D)=\emptyset$ since every vertex $v$ of the graph is connected to all the elements of $D$ but the one in the same MIS as $v$. As a consequence, $B$ is accessing.
 
\item  $[\kappa'(G)\le q]$: $B$ composed of one vertex from each MIS is an accessing set (see previous item).
\item  $[\kappa'(G)\ge q]$: If $|B|<q$ then $B$ does not intersect all the MIS of size $p$, so $B$ is in the odd neighborhood of each vertex of such a MIS. So according to Lemma \ref{non-accessing}, $B$ is not accessing. 
\end{itemize}
\end{proof}

\begin{lemma}
If $q=0\bmod{2}$ then $\kappa(G_{p,q})=max(n-p,n-q)$ and $\kappa'(G_{p,q})=p+q+1$
\end{lemma}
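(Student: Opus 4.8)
The plan is to mirror the structure of the odd case, handling the four inequalities $\kappa(G) \ge \max(n-p, n-q)$, $\kappa(G) \le \max(n-p,n-q)$, $\kappa'(G) \le p+q+1$, and $\kappa'(G) \ge p+q+1$ separately, but the parity obstruction forces new constructions. For $\kappa(G) \ge n-p$, the same argument as before works verbatim: the set $B$ of all vertices minus one maximal independent set is in the odd neighborhood of each removed vertex, hence non-accessing by Lemma~\ref{non-accessing}. For $\kappa(G) \ge n-q$, the idea is to take $B = V \setminus \{v_1,\dots,v_q\}$ where $\{v_1,\dots,v_q\}$ is a clique (one vertex per MIS): each $v_i$ has odd-neighborhood inside $B$ equal to... one must instead find a witness set $C \subseteq V \setminus B$ with $B \subseteq Odd(C)$. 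Taking $C$ to be a clique of size $q$: every vertex of $B$ lies in some MIS, is adjacent to exactly $q-1$ of the $C$-vertices (all but the one in its own MIS), and $q-1$ is odd, so $B \subseteq Odd(C)$. Thus $B$ of size $n-q$ is non-accessing.

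For the upper bound $\kappa(G) \le \max(n-p, n-q)$, I would argue by contradiction: suppose $B$ is non-accessing with $|B| > \max(n-p,n-q)$, so $|B| > n-p$ forces $B$ to contain a transversal clique $D$ of size $q$, and $|B| > n-q$ forces $B$ to contain some full MIS $S$ of size $p$. Since $q$ is even, the clique $D$ alone is not a valid accessing witness ($|D|$ even), so I need to combine $D$ with something of odd total size whose odd-neighborhood stays inside $B$. The natural candidate is $D' = (D \setminus \{d\}) \cup S$ or $D \cup S$ with a parity adjustment; one computes $Odd$ of such a set using that $Odd(S) = V \setminus S$ (every vertex outside $S$ is adjacent to all $p$ vertices of $S$ when $p$ is... careful: adjacency within $G_{p,q}$ means different parts, so a vertex outside $S$ sees all $p$ vertices of $S$ only if $p$ is odd). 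This is where the value $p+q+1$ and the split into $p,q$ parities really enters, and it is the step I expect to be the main obstacle — getting the parity bookkeeping exactly right so that some odd-size subset of $B$ has odd-neighborhood contained in $B$, establishing that $B$ is accessing.

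For $\kappa'(G) \le p+q+1$, I would exhibit an explicit accessing set of that size: take one full MIS $S$ (size $p$), a transversal clique $T$ on the remaining parts together with possibly one more vertex, total $p + q + 1$, and check that an appropriate odd-size subset $D \subseteq S \cup T$ has $Odd(D) \subseteq S \cup T$; the $+1$ is forced by the parity of $q$. For $\kappa'(G) \ge p+q+1$, I would show every $B$ with $|B| \le p+q$ is non-accessing, i.e. produce $C \subseteq V \setminus B$ with $B \subseteq Odd(C)$: if $B$ misses some MIS entirely, a single vertex of that MIS (or a pair, if parity demands) works; the residual cases where $B$ meets every MIS but is small are handled by choosing $C$ to be a clique of appropriate (even or odd) size so that each vertex of $B$ sees an odd number of $C$-vertices, again exploiting $q$ even. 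I expect the bookkeeping in the two $\kappa$ bounds, especially reconciling the $n-p$ versus $n-q$ cases and the dependence on the parity of $p$, to be the delicate part; the $\kappa'$ bounds should follow the template of the odd-$q$ lemma with the single extra vertex absorbing the parity defect.
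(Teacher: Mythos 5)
Your proposal has two genuine problems. The first is that you are chasing the wrong value of $\kappa'$: the constant $p+q+1$ in the statement is a typo for $p+q-1$, and your own construction essentially proves it. Take a full independent part $S$ (size $p$) and a transversal clique $T$ meeting every part, and set $B=S\cup T$; since $T$ and $S$ share a vertex, $|B|=p+q-1$. The witness is $D=T\setminus S$, of size $q-1=1\bmod{2}$: every $u\in V\setminus B$ lies in a part other than that of $S$ and is adjacent to exactly $q-2$ (an even number) of the vertices of $D$, so $Odd(D)\subseteq B$ and $B$ is accessing. No extra vertex is needed to absorb a parity defect, so the ``$+1$'' you are trying to justify cannot be justified --- the inequality $\kappa'\ge p+q+1$ is simply false. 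The matching lower bound is $\kappa'\ge p+q-1$, proved exactly as you sketch: if $|B|<p+q-1$ and $B$ misses a part entirely, each vertex of that part has $B$ in its odd neighborhood; if $B$ meets every part, it cannot contain a full part, so there is a transversal clique $C\subseteq V\setminus B$ of size $q$, and every vertex of $B$ is adjacent to $q-1$ (odd) vertices of $C$, whence $B\subseteq Odd(C)$ and Lemma~\ref{non-accessing} applies.

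The second problem is the step you yourself flag as the main obstacle, namely $\kappa(G)\le\max(n-p,n-q)$. You assemble the right ingredients ($|B|>n-p$ forces $B$ to meet every part, $|B|>n-q$ forces $B$ to contain a full part $S$), but then you look for a witness of the form $(D\setminus\{d\})\cup S$ or $D\cup S$, which drags in the parity of $p$ and does not close. The witness that works is the truncated clique alone: choose $D\subseteq B$ with exactly one vertex in each of the $q-1$ parts other than that of $S$ (possible since $B$ meets every part). Then $|D|=q-1=1\bmod{2}$, and every $u\in V\setminus B$ lies outside the part of $S$, hence is adjacent to all of $D$ except the single vertex of $D$ in its own part, i.e.\ to $q-2$ (even) vertices; so $Odd(D)\subseteq B$ and $B$ is accessing. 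The full part $S$ is needed only to guarantee that $V\setminus B$ avoids one part; it should not appear in the witness set, and the parity of $p$ is irrelevant. Your two lower bounds on $\kappa$ are correct and coincide with the paper's argument.
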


\begin{proof}
~
\begin{itemize}
\item $[\kappa(G)\ge max(n-p,n-q)]$: For $\kappa(G)\ge n-p$, see previous lemma. The subset $B$ composed of all the vertices but a clique of size $q$ (one vertex from each MIS) is in the odd neighborhood of $V\setminus B$. Indeed each vertex of $B$ is connected to $q-1=1\bmod{2}$ vertices of $V\setminus B$. So, according to Lemma  \ref{non-accessing}, $B$ of size $n-q$ is not accessing, as a consequence $\kappa(G)\ge n-q$.

\item $[\kappa(G)\le max(n-p,n-q)]$: Any set $B$ s.t. $|B|> max(n-p,n-q)$ contains at least one vertex from each MIS and moreover it contains a MIS $S$ of size $q$. Let $D\subseteq B\setminus S$ be a clique of size $q-1=1\bmod{2}$. Every vertex $u$ in $V\setminus B$ is connected to all the vertices in $D$ but  one, so $Odd(D)\subseteq B$.

\item  $[\kappa'(G)\le p+q-1]$: Let $S$ be an MIS. Let $B$ be the union of $S$ and  of a clique of size $q$. Let $D= B\setminus S$. $|D|=q-1=1\bmod{2}$. Every vertex  $u$ in $V\setminus B$ is connected to all the vertices of $D$ but one, so $Odd(D)\subseteq B$.

\item  $[\kappa'(G)\ge p+q-1]$: Let $|B|<p+q-1$. If  $B$ does not intersect all the MIS of size $p$, then $B$ is in the odd neighborhood  of each vertex of such a non intersecting MIS. If $B$ intersects all the MIS then it does not contain any MIS, thus there exists a clique $C\subseteq V\setminus B$ of size $q$. Every vertex in $B$ is in the odd neighborhood of $C$. 
\end{itemize}
\end{proof}

\begin{corollary}
If $n=p^2$, $\kappa_Q(G_{\sqrt{n},\sqrt{n}})= n-\sqrt{n}$
\end{corollary}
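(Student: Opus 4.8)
The plan is to simply feed the two preceding lemmas into the definition $\kappa_Q(G)=\max\bigl(\kappa(G),\,n-\kappa'(G)\bigr)$, specialising to $p=q=\sqrt n$ so that $G_{\sqrt n,\sqrt n}$ has exactly $n=pq=p^2$ vertices. Since the two lemmas split on the parity of $q$, the proof naturally breaks into two cases according to whether $\sqrt n$ is odd or even, and in each case the value of $\kappa_Q$ is obtained by an immediate comparison of two explicit numbers.

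First I would handle the case $\sqrt n = 1\bmod 2$. Here the first lemma gives $\kappa(G_{\sqrt n,\sqrt n}) = n-p = n-\sqrt n$ and $\kappa'(G_{\sqrt n,\sqrt n}) = q = \sqrt n$, hence $n-\kappa'(G_{\sqrt n,\sqrt n}) = n-\sqrt n$ as well. Both arguments of the maximum equal $n-\sqrt n$, so $\kappa_Q(G_{\sqrt n,\sqrt n}) = n-\sqrt n$.

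Next I would handle the case $\sqrt n = 0\bmod 2$. Here the second lemma gives $\kappa(G_{\sqrt n,\sqrt n}) = \max(n-p,\,n-q) = n-\sqrt n$ (as $p=q=\sqrt n$) and $\kappa'(G_{\sqrt n,\sqrt n}) = p+q-1 = 2\sqrt n - 1$, so that $n-\kappa'(G_{\sqrt n,\sqrt n}) = n-2\sqrt n +1$. Since $\sqrt n \ge 1$ we have $n-2\sqrt n+1 \le n-\sqrt n$, hence again $\kappa_Q(G_{\sqrt n,\sqrt n}) = \max(n-\sqrt n,\,n-2\sqrt n+1) = n-\sqrt n$. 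Combining the two cases yields the claim.

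There is essentially no obstacle here: the content of the result is carried entirely by the two lemmas, and the corollary is just their numerical consequence. The only points requiring any attention are checking that $\max(n-p,n-q)$ collapses to $n-\sqrt n$ when $p=q$, and observing the trivial inequality $\sqrt n\ge 1$ that guarantees the $n-\kappa'$ term never beats $n-\kappa$ in the even case; (the same conclusion holds verbatim if one reads $\kappa'=p+q+1$ from the lemma statement, since then $n-\kappa'$ is only smaller).
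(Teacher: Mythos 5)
Your proof is correct and is exactly the intended argument: the paper states this corollary without proof precisely because it follows by substituting $p=q=\sqrt n$ into the two preceding lemmas and comparing the two terms in the definition of $\kappa_Q$, in both parity cases as you do. Your parenthetical remark also correctly notes that the conclusion is unaffected by the discrepancy between the even-case lemma's statement ($\kappa'=p+q+1$) and what its proof actually establishes ($\kappa'=p+q-1$).
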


\section{Bounds and NP-Completeness}

For a given graph $G$, we show that the sum of $\kappa(G)$ and $\kappa'(\overline G)$ is always greater than the order of the graph $G$. The proof is based on the duality property  that the complement of an accessing set in $G$ is a non accessing set in $\overline G$:

\begin{lemma}\label{lem:dual}
Given a graph $G$, if $B$ is accessing in $G$ then $V\setminus B$ is not accessing in $\overline G$.  More precisely, if $\exists D\subseteq B$, $|D|=1\bmod 2$ and $Odd_G(D)\subseteq B$ then $Odd_{\overline G} (D)\supseteq V\setminus B$. 
\end{lemma}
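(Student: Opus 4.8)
The plan is to prove the more precise statement, since the first assertion follows from it immediately: taking $C=D$, the inclusion $Odd_{\overline G}(D)\supseteq V\setminus B$ together with $D\subseteq B=V\setminus(V\setminus B)$ is exactly the witness required by Lemma \ref{non-accessing} to certify that $V\setminus B$ is not accessing in $\overline G$. So everything reduces to a parity computation about odd neighborhoods under graph complementation.

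First I would fix a vertex $v\in V\setminus B$ and analyze $\mathcal N_{\overline G}(v)\cap D$. The key observation is that $v\notin D$: indeed $D\subseteq B$ while $v\in V\setminus B$. Hence, since $\mathcal N_{\overline G}(v)=V\setminus(\{v\}\cup\mathcal N_G(v))$, intersecting with $D$ kills the $\{v\}$ part and gives $\mathcal N_{\overline G}(v)\cap D = D\setminus\mathcal N_G(v)$, so that $|\mathcal N_{\overline G}(v)\cap D| = |D|-|\mathcal N_G(v)\cap D|$.

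Next I would invoke the hypotheses. Because $Odd_G(D)\subseteq B$ and $v\notin B$, we have $v\notin Odd_G(D)$, i.e. $|\mathcal N_G(v)\cap D|$ is even. Combined with $|D|=1\bmod 2$, the displayed identity yields $|\mathcal N_{\overline G}(v)\cap D|=1\bmod 2$, i.e. $v\in Odd_{\overline G}(D)$. Since $v$ was an arbitrary element of $V\setminus B$, this proves $Odd_{\overline G}(D)\supseteq V\setminus B$, and then Lemma \ref{non-accessing} closes the argument as explained above.

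There is essentially no hard step here: the whole content is the complementation bookkeeping, and the only point that needs care is recording that $v\notin D$ (so that the exclusion of $v$ in $\mathcal N_{\overline G}(v)$ does not affect the intersection with $D$) and keeping track of the parity shift by $|D|$. If anything, the mild subtlety is making sure the direction of Lemma \ref{non-accessing} is applied correctly, namely with ambient complement $V\setminus(V\setminus B)=B$ and witness set $C=D\subseteq B$.
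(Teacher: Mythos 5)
Your proof is correct and follows essentially the same route as the paper's: the per-vertex parity computation $|\mathcal N_{\overline G}(v)\cap D|=|D|-|\mathcal N_G(v)\cap D|$ for $v\in V\setminus B$, followed by an application of Lemma \ref{non-accessing} with witness $D\subseteq B$. You merely make explicit the point the paper leaves implicit, namely that $v\notin D$ is what justifies the complementation identity.
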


\begin{proof} Let $B$ be an accessing set in $G$. $\exists D\subseteq B$ s.t. $|D|=1\bmod 2$ and $Odd_G(D)\subseteq B$. As a consequence, $\forall v\in V\setminus B$, $|\mathcal N_G(v)\cap D|=0 \bmod 2$. Since $|D|=1\bmod 2$, $\forall v\in V\setminus B$, $|\mathcal N_{\overline G}(v)\cap D|=1\bmod 2$. Thus, according to Lemma \ref{non-accessing}, $V\setminus B$ is not accessing in $\overline G$. 
\end{proof}

\begin{theorem}\label{thm:dual}
For any graph $G$ of order $n$, $$\kappa'(G)+\kappa(\overline G)\ge n$$
\end{theorem}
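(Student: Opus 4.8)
The plan is to derive Theorem \ref{thm:dual} directly from Lemma \ref{lem:dual} by an extremal argument. Take an accessing set $B$ in $G$ of minimum size, so that $|B| = \kappa'(G)$. By Lemma \ref{lem:dual}, the complement $V \setminus B$ is not accessing in $\overline{G}$. Hence $V \setminus B$ is a particular non-accessing set in $\overline{G}$, so by definition of $\kappa$ we get $\kappa(\overline{G}) \ge |V \setminus B| = n - |B| = n - \kappa'(G)$. Rearranging gives $\kappa'(G) + \kappa(\overline{G}) \ge n$, which is exactly the claimed inequality.

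Essentially all the content is already packed into Lemma \ref{lem:dual}; the theorem is just the bookkeeping step that turns the pointwise duality statement into an inequality between the two optimized quantities. So there is no real obstacle here. The one thing to be slightly careful about is the edge case where $G$ has no accessing set at all, in which case $\kappa'(G)$ would be vacuously $+\infty$ (or undefined) and the inequality holds trivially; in practice every graph with at least one vertex has an accessing set — e.g. for an isolated vertex $v$, taking $B = \{v\}$ and $D = \{v\}$ works since $Odd(D) = \emptyset \subseteq B$ and $|D|=1$ is odd — so $\kappa'(G)$ is always well-defined and at most $n$, and the argument above goes through verbatim.

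If one wanted the companion inequality $\kappa(G) + \kappa'(\overline{G}) \ge n$, the same proof applies after swapping the roles of $G$ and $\overline{G}$, using that $\overline{\overline{G}} = G$. It may also be worth remarking that combining the theorem with the trivial bound $\kappa'(G) \le n$ shows $\kappa(\overline{G})$ can be as large as $n - 1$ only when $\kappa'(G)$ is small, tying back to the discussion of good thresholds; but for the theorem itself, the two-line argument above suffices.
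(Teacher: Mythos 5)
Your argument is correct and is essentially identical to the paper's proof: take a minimum-size accessing set $B$ in $G$, apply Lemma \ref{lem:dual} to conclude $V\setminus B$ is non-accessing in $\overline G$, and read off $\kappa(\overline G)\ge n-\kappa'(G)$. The extra remarks about well-definedness of $\kappa'(G)$ and the companion inequality are fine but not needed.
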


\begin{proof}
It exists an accessing set $B\subseteq V(G)$ s.t. $|B|= \kappa'(G)$. According to Lemma \ref{lem:dual}, $V\setminus B$ is not accessing in $\overline G$, so $n-|B|\le \kappa(\overline G)$, so $n-\kappa'(G)\le \kappa(\overline G)$. 
\end{proof}

For a given graph $G$, the closed neighborhood ($\{v\} \cup \mathcal{N}(v)$) of any vertex $v$ is an accessing set, whereas the open neighborhood ($\mathcal{N}(v)$) of any vertex $v$ is a non accessing set, as a consequence:
$$\kappa(G)\ge \Delta~~~~~~~~~~~~~~~~~\kappa'(G)\le \delta+1$$
where $\Delta$ (resp. $\delta$) denotes the maximal (resp. minimal) degree of the graph $G$. 

In the following, we prove an upper bound on $\kappa(G)$ and a lower bound on $\kappa'(G)$.

\begin{lemma}\label{lem:ubound}
For any graph $G$, $$\kappa(G)\le \frac{n.\Delta}{\Delta +1}$$ where $n=|V(G)|$.
\end{lemma}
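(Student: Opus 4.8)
The plan is to exploit the characterization of non-accessing sets from Lemma~\ref{non-accessing}: if $B$ is non-accessing with $|B|=\kappa(G)$, then there is a set $C\subseteq V\setminus B$ with $B\subseteq Odd(C)$. The idea is to use $C$ together with the bound $\Delta$ on degrees to control how many vertices $C$ can possibly dominate (in the parity sense), and hence bound $|B|$ in terms of $|C|$ and $\Delta$; then combine with $|B|+|C|\le n$ to extract the stated inequality.

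First I would fix such a minimum witness set $C$, i.e. among all $C'\subseteq V\setminus B$ with $B\subseteq Odd(C')$ choose one of smallest cardinality, and set $c=|C|$. Since $B\subseteq Odd(C)$, every vertex of $B$ has a neighbor in $C$, so $B\subseteq \bigcup_{v\in C}\mathcal N(v)$, giving $|B|\le \sum_{v\in C}|\mathcal N(v)|\le c\,\Delta$. Separately, since $C\subseteq V\setminus B$ we have $c\le n-|B|$, i.e. $|B|\le n-c$. Writing $|B|=\kappa(G)$ and eliminating $c$ between $\kappa(G)\le c\Delta$ and $\kappa(G)\le n-c$: the first gives $c\ge \kappa(G)/\Delta$, and plugging into the second yields $\kappa(G)\le n-\kappa(G)/\Delta$, which rearranges to $\kappa(G)(1+1/\Delta)\le n$, i.e. $\kappa(G)\le n\Delta/(\Delta+1)$, as desired.

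The main subtlety to check carefully is the step $|B|\le c\,\Delta$. This only needs that each vertex of $B$ lies in the neighborhood of at least one vertex of $C$, which is immediate from $B\subseteq Odd(C)$ (a vertex in $Odd(C)$ has an odd, hence nonzero, number of neighbors in $C$); no minimality of $C$ is actually required for this, only the trivial counting $\big|\bigcup_{v\in C}\mathcal N(v)\big|\le\sum_{v\in C}|\mathcal N(v)|\le c\Delta$. One should also note the degenerate case: if $\kappa(G)=0$ the bound is vacuous, and if $C=\emptyset$ then $B\subseteq Odd(\emptyset)=\emptyset$ so again $\kappa(G)=0$; otherwise $c\ge 1$ and the algebra above goes through. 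So the only real work is assembling these two elementary inequalities, and I expect no genuine obstacle — the argument is short once Lemma~\ref{non-accessing} is invoked to produce $C$.
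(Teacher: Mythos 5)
Your proof is correct and follows essentially the same route as the paper: invoke Lemma~\ref{non-accessing} to obtain $C\subseteq V\setminus B$ with $B\subseteq Odd(C)$, bound $|B|\le \Delta|C|$ by counting neighbors of $C$, combine with $|C|\le n-|B|$, and solve. Your extra remarks (minimality of $C$ being unnecessary, the degenerate case $C=\emptyset$) are harmless additions to what is the paper's argument verbatim.
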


\begin{proof}
Let $B\subseteq V(G)$ be a non accessing set, so according to Lemma \ref{non-accessing}, $\exists C\subseteq V\setminus B$ s.t. $Odd(C)\supseteq B$. $|C|\le n-|B|$ and $|B|\le |Odd(C)|\le \Delta.|C|$, so $|B|\le \Delta.(n-|B|)$. It comes that $|B|\le \frac{n.\Delta}{\Delta+1}$, so $\kappa(G)\le \frac{n.\Delta}{\Delta+1}$.
\end{proof}

This bound is reached only for graphs having a perfect code. A graph $G$ has a perfect code if  it exists $ C\subseteq V(G)$ such that $C$ is an independent set and every vertex in $V(G)\setminus C$ has exactly one neighbor in $C$.  

\begin{theorem}
For any graph $G$, $\kappa(G)=\frac{n.\Delta}{\Delta +1}$ iff $G$ has a perfect code $C$ such that  $\forall v\in C$, $d(v)=\Delta$. 
\end{theorem}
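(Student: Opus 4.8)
The plan is to prove the equivalence in two directions, using Lemma~\ref{lem:ubound} and its proof as the starting point. First I would analyze when the chain of inequalities in the proof of Lemma~\ref{lem:ubound} is tight. For a non-accessing $B$ of size $\kappa(G)$, we obtained a set $C\subseteq V\setminus B$ with $Odd(C)\supseteq B$ and the inequalities $|C|\le n-|B|$ and $|B|\le |Odd(C)|\le \Delta\cdot|C|$. Equality $\kappa(G)=\frac{n\Delta}{\Delta+1}$ forces all three to be equalities simultaneously: $C=V\setminus B$, $Odd(C)=B$ (so $V=C\sqcup Odd(C)$), and $|Odd(C)|=\Delta\cdot|C|$, which in turn forces every vertex of $C$ to have degree exactly $\Delta$ and, moreover, every vertex $v\in Odd(C)=V\setminus C$ to have \emph{exactly one} neighbour in $C$ (since the $\Delta|C|$ edges leaving $C$ land in a set $Odd(C)$ of size exactly $\Delta|C|$, each such vertex receives exactly one) — note $|\mathcal N(v)\cap C|$ odd together with the counting bound pins it to $1$. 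It remains to see $C$ is independent: if two vertices of $C$ were adjacent, then an endpoint $u\in C$ has a neighbour in $C$, so among its $\Delta$ edges at most $\Delta-1$ leave $C$, contradicting that the total number of $C$-to-$(V\setminus C)$ edges is exactly $\Delta|C|$ with every vertex of $C$ having full degree $\Delta$. Hence $C$ is a perfect code all of whose vertices have degree $\Delta$.

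For the converse, suppose $G$ has a perfect code $C$ with $d(v)=\Delta$ for all $v\in C$. Set $B=V\setminus C$. Since every vertex of $V\setminus C=B$ has exactly one neighbour in $C$, we have $|\mathcal N(v)\cap C|=1$ for all $v\in B$, i.e. $Odd(C)\supseteq B$, so by Lemma~\ref{non-accessing} $B$ is non-accessing and $\kappa(G)\ge|B|=n-|C|$. On the other hand, counting edges between $C$ and $V\setminus C$ two ways: there are exactly $\Delta|C|$ such edges (each vertex of $C$ contributes $\Delta$, none wasted since $C$ is independent), and each vertex of $V\setminus C$ absorbs at least one, so $|V\setminus C|\le \Delta|C|$, giving $n=|C|+|V\setminus C|\le (\Delta+1)|C|$, i.e. $n-|C|\le \frac{n\Delta}{\Delta+1}$. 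Combined with the upper bound $\kappa(G)\le\frac{n\Delta}{\Delta+1}$ from Lemma~\ref{lem:ubound} and $\kappa(G)\ge n-|C|$, we will be done once we check $n-|C|=\frac{n\Delta}{\Delta+1}$, equivalently $|C|=\frac{n}{\Delta+1}$; but this last equality follows because in a perfect code with all code-vertices of degree $\Delta$, every vertex of $V\setminus C$ has \emph{exactly} one neighbour in $C$, so the edge count gives $|V\setminus C|=\Delta|C|$ exactly, hence $n=(\Delta+1)|C|$.

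The main obstacle I anticipate is the first direction: extracting from a single tightness equation the full combinatorial structure (that $C=V\setminus B$ is forced, that $C$ is independent, and that every code vertex has degree exactly $\Delta$) requires carefully tracking which inequalities are used and showing they cannot be slack individually — in particular the subtle point that $|Odd(C)|\le \Delta|C|$ being an equality uses both that each vertex of $C$ has degree $\Delta$ \emph{and} that the odd-neighbourhood loses nothing, i.e. no vertex of $V\setminus C$ has an even number $\ge 2$ of neighbours in $C$ and no two vertices of $C$ are adjacent. A clean way to organize this is to write $e(C,V\setminus C)$ for the number of edges between $C$ and its complement and prove the sandwich $|B|\le |Odd(C)| \le e(C,V\setminus C) \le \Delta|C| \le \Delta(n-|B|)$, then observe that $\kappa(G)=\frac{n\Delta}{\Delta+1}$ collapses every inequality, and read off the perfect-code conditions from each collapsed step. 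The converse direction is then essentially the reverse reading of the same sandwich and should be routine.
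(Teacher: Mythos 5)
Your proof is correct and takes essentially the same approach as the paper: the converse uses $V\setminus C$ as a non-accessing set together with Lemma~\ref{lem:ubound}, and the forward direction extracts the perfect code by forcing equality throughout the edge-counting chain. One cosmetic slip: the link $|Odd(C)|\le e(C,V\setminus C)$ in your final sandwich can fail when $Odd(C)$ meets $C$, so you should use $|B|\le e(C,V\setminus C)$ instead (valid since $B\subseteq Odd(C)$ and $B\cap C=\emptyset$), which is exactly the paper's step $|B|=|B\cap Odd(C)|\le\sum_{v\in C}d(v)$.
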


\begin{proof}
($\Leftarrow$) Let $C$ be a perfect code of $G$ s.t.  $\forall v\in C$, $\delta(v)=\Delta$. $V(G)\setminus C$ is a non accessing set since $Odd(C) = V(G)\setminus C$. Moreover $|V(G)\setminus C|= \frac{n\Delta}{\Delta+1}$, so $\kappa(G)\ge \frac{n.\Delta}{\Delta +1}$. According to Lemma \ref{lem:ubound}, $\kappa(G)\le \frac{n\Delta}{\Delta+1}$, so $\kappa(G)= \frac{n\Delta}{\Delta+1}$.  \\
($\Rightarrow$) Let $B$ be a non accessing set of size $\frac{n.\Delta}{\Delta +1}$. According to Lemma \ref{non-accessing}, $\exists C\subseteq V\setminus B$ s.t. $Odd(C)\supseteq B$. Notice that $|C|\le n-\frac{n.\Delta}{\Delta +1}=\frac{n}{\Delta +1}$. Moreover $|C|.\Delta\ge |Odd(C)| \ge  |B|$, so  $|C|=\frac{n}{\Delta +1}$. It comes  $|B|=|B\cap Odd(C)|\le \sum_{v\in C}d(v)\le \Delta.\frac{n}{\Delta+1}= |B|$. Notice that if $C$ is not a perfect code the first inequality is strict, and if $\exists v\in C$, $d(v)<\Delta$, the second inequality is strict. Consequently, $C$ is a perfect code and $\forall v\in C$, $d(v)=\Delta$.  
\end{proof}

\begin{corollary}\label{cor:kappa}
Given a $\Delta$-regular graph $G$, 
$$\kappa(G) = \frac {n\Delta} {\Delta+1}\iff G \text{ has a perfect code} $$
\end{corollary}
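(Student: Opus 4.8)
The plan is to derive Corollary~\ref{cor:kappa} as a direct specialization of the preceding theorem. For a $\Delta$-regular graph $G$, every vertex has degree exactly $\Delta$, so the condition ``$G$ has a perfect code $C$ such that $\forall v\in C$, $d(v)=\Delta$'' appearing in the theorem collapses simply to ``$G$ has a perfect code'': the degree constraint is automatically satisfied by every vertex, hence in particular by every vertex of any candidate code $C$. Thus both directions of the ``$\iff$'' are inherited verbatim from the theorem once one observes that $\Delta$-regularity makes the two hypotheses logically equivalent.

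Concretely, I would argue as follows. Apply the theorem to the $\Delta$-regular graph $G$. For the ($\Leftarrow$) direction, suppose $G$ has a perfect code $C$; since $G$ is $\Delta$-regular, $d(v)=\Delta$ for all $v\in C$, so the hypothesis of the theorem's ($\Leftarrow$) direction holds, giving $\kappa(G)=\frac{n\Delta}{\Delta+1}$. For the ($\Rightarrow$) direction, suppose $\kappa(G)=\frac{n\Delta}{\Delta+1}$; by the theorem's ($\Rightarrow$) direction there is a perfect code $C$ (with the degree property, which is vacuous here), so in particular $G$ has a perfect code.

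There is essentially no obstacle here: the corollary is a bookkeeping consequence of the theorem, and the only thing to check is that regularity indeed neutralizes the ``$d(v)=\Delta$ on $C$'' side condition, which is immediate. If one wanted the statement to be fully self-contained, the single line worth spelling out is that in a $\Delta$-regular graph the maximum degree $\Delta$ is attained at every vertex, so no vertex — and a fortiori no vertex of a perfect code — can violate $d(v)=\Delta$.
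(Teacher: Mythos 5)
Your proof is correct and matches the paper's (implicit) reasoning exactly: the corollary is an immediate specialization of the preceding theorem, since in a $\Delta$-regular graph the condition $d(v)=\Delta$ for all $v\in C$ holds automatically for any perfect code $C$. Nothing further is needed.
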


We consider the problem {\bf KAPPA$_\le$} (resp. {\bf KAPPA$_\ge$}) which consists in deciding, given a graph $G$ and an integer $k\ge 0$, whether $\kappa(G)\le k$ (resp. $\kappa(G) \ge k$).

\begin{theorem}
{\bf KAPPA$_\ge$} is NP-Complete.
\end{theorem}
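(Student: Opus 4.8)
The plan is to show that $\textbf{KAPPA}_\ge$ is in NP and then reduce from the perfect code problem, exploiting Corollary \ref{cor:kappa}.

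For membership in NP: given a graph $G$ and integer $k$, a certificate that $\kappa(G) \ge k$ is a non-accessing set $B$ with $|B| \ge k$, together with (by Lemma \ref{non-accessing}) a set $C \subseteq V \setminus B$ witnessing $B \subseteq Odd(C)$. Verifying that $B \subseteq Odd(C)$ amounts to checking, for each $v \in B$, that $|\mathcal N(v) \cap C|$ is odd, which is polynomial. So the certificate $(B, C)$ can be checked in polynomial time.

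For NP-hardness: I would reduce from \textsc{Perfect Code} on $\Delta$-regular graphs, which is known to be NP-complete (indeed it is NP-complete even for $3$-regular graphs, by Kratochvíl). Given a $\Delta$-regular graph $G$ on $n$ vertices, output the instance $\big(G,\, k\big)$ with $k = \frac{n\Delta}{\Delta+1}$ if $(\Delta+1) \mid n$, and otherwise output a trivial no-instance (e.g. $k = n+1$) — note that if $G$ has a perfect code $C$ then every vertex lies in $C$ or has a unique neighbor in $C$, partitioning $V$ into $|C|$ closed neighborhoods each of size $\Delta+1$, so $(\Delta+1)\mid n$ is necessary. By Corollary \ref{cor:kappa}, $G$ has a perfect code if and only if $\kappa(G) = \frac{n\Delta}{\Delta+1}$, and since Lemma \ref{lem:ubound} gives $\kappa(G) \le \frac{n\Delta}{\Delta+1}$ always, this is equivalent to $\kappa(G) \ge \frac{n\Delta}{\Delta+1} = k$. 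Hence $G$ has a perfect code iff the output is a yes-instance of $\textbf{KAPPA}_\ge$, and the reduction is clearly polynomial-time computable.

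The main thing to get right is the reduction source: one must cite an NP-completeness result for \textsc{Perfect Code} restricted to regular graphs (for general graphs the equivalence with $\kappa$ via the clean formula $\frac{n\Delta}{\Delta+1}$ breaks, since the theorem preceding Corollary \ref{cor:kappa} requires the perfect code to consist of maximum-degree vertices). Since \textsc{Perfect Code} is NP-complete for cubic graphs, taking $\Delta = 3$ suffices and sidesteps any subtlety; the divisibility check $4 \mid n$ is then the only bookkeeping needed.
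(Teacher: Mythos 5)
Your proposal is correct and follows essentially the same route as the paper: membership in NP via a non-accessing set as certificate (you add the explicit witness $C$ from Lemma \ref{non-accessing}, which makes verification cleanly polynomial), and hardness by reduction from \textsc{Perfect Code} on $3$-regular graphs using Corollary \ref{cor:kappa} together with the upper bound of Lemma \ref{lem:ubound}. Your explicit handling of the divisibility condition $(\Delta+1)\mid n$ is a minor tidying-up of a point the paper leaves implicit, not a different argument.
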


\begin{proof}
${\bf KAPPA_\ge}$ is in the class NP since a non accessing set of size $k$ is a YES certificate.
For the completeness, given a 3-regular graph, if $\kappa(G) \ge \frac 3 4 n$ then $\kappa(G) =  \frac 3 4 n$ (since $\kappa(G) \le  \frac{n\Delta}{\Delta+1}$ for any graph). Moreover, according to Corollary \ref{cor:kappa}, $\kappa(G)= \frac 3 4 n$ iff $G$ has a perfect code. Since the problem of deciding whether a $3$-regular graph has a perfect code  is known to be NP complete (see \cite{perf_codes} and \cite{KMP}), so is ${\bf KAPPA_\ge}$.
\end{proof}

\begin{corollary}
{\bf KAPPA$_\le$} is coNP-Complete.
\end{corollary}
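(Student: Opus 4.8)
The plan is to observe that {\bf KAPPA$_\le$} is, up to a harmless shift of the integer parameter, the complement of {\bf KAPPA$_\ge$}, and then to invoke the preceding theorem that {\bf KAPPA$_\ge$} is NP-complete together with the standard fact that a language is coNP-complete if and only if its complement is NP-complete.

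First I would argue membership in coNP. A NO instance of {\bf KAPPA$_\le$} is a pair $(G,k)$ with $\kappa(G)>k$, i.e. $\kappa(G)\ge k+1$; by Lemma \ref{non-accessing}, a polynomial-size certificate for this is a set $B\subseteq V(G)$ with $|B|\ge k+1$ together with a set $C\subseteq V\setminus B$ such that $B\subseteq Odd(C)$, and both conditions are checkable in polynomial time. Hence the complement of {\bf KAPPA$_\le$} is in NP, so {\bf KAPPA$_\le$} is in coNP.

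For coNP-hardness I would give a polynomial-time many-one reduction from the complement of {\bf KAPPA$_\ge$} (which is coNP-complete because {\bf KAPPA$_\ge$} is NP-complete) to {\bf KAPPA$_\le$}. For $k\ge 1$ the map $(G,k)\mapsto (G,k-1)$ works, since $\kappa(G)\ge k \iff \kappa(G)>k-1 \iff \neg\bigl(\kappa(G)\le k-1\bigr)$, so $(G,k)$ is a YES instance of {\bf KAPPA$_\ge$} exactly when $(G,k-1)$ is a NO instance of {\bf KAPPA$_\le$}; the degenerate value $k=0$, for which $\kappa(G)\ge 0$ always holds, is sent to any fixed NO instance of {\bf KAPPA$_\le$}, for instance $(K_2,0)$, as $\kappa(K_2)=1$. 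One could also reduce directly from the 3-regular perfect code problem essentially as in the previous proof, using Corollary \ref{cor:kappa}: for a 3-regular graph $G$ with $4\mid n$, $\neg\bigl(\kappa(G)\le \frac 3 4 n-1\bigr)$ holds iff $\kappa(G)=\frac 3 4 n$ iff $G$ has a perfect code.

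There is essentially no real obstacle here, as the entire content is carried by the NP-completeness of {\bf KAPPA$_\ge$} established above; the only points needing a moment of care are getting the direction of the reduction right (it is \emph{the complement of} {\bf KAPPA$_\ge$}, not {\bf KAPPA$_\ge$} itself, that reduces to {\bf KAPPA$_\le$}) and treating the boundary parameter value $k=0$.
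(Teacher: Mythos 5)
Your argument is correct and is exactly the intended content of this corollary, which the paper leaves as an immediate consequence of the NP-completeness of {\bf KAPPA$_\ge$}: membership in coNP via a non-accessing set as a NO certificate, and coNP-hardness because the complement of {\bf KAPPA$_\ge$} reduces to {\bf KAPPA$_\le$} by the shift $(G,k)\mapsto(G,k-1)$. Your handling of the boundary case $k=0$ and the alternative direct reduction from the perfect-code problem are both fine.
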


Now we introduce a lower bound on $\kappa'$.

\begin{lemma}
For any graph $G$, 
$$\kappa'(G)\ge \frac n{n-\delta}$$ where $\delta$ is the minimal degree of $G$. 
\end{lemma}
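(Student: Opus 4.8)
The plan is to obtain this lower bound directly from the two results just established applied to the complement graph: the duality inequality $\kappa'(G)+\kappa(\overline G)\ge n$ (Theorem \ref{thm:dual}) and the upper bound $\kappa(H)\le \frac{|V(H)|\,\Delta(H)}{\Delta(H)+1}$ (Lemma \ref{lem:ubound}), the latter instantiated with $H=\overline G$.

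First I would rewrite Theorem \ref{thm:dual} in the form $\kappa'(G)\ge n-\kappa(\overline G)$. Then I would record that the maximum degree of $\overline G$ equals $n-1-\delta$, where $\delta$ is the minimum degree of $G$: indeed every vertex $v$ has degree $n-1-d_G(v)$ in $\overline G$, and $n-1-d_G(v)$ is maximized precisely at a vertex of minimum degree of $G$. Substituting $\Delta(\overline G)=n-1-\delta$ into Lemma \ref{lem:ubound} gives $\kappa(\overline G)\le \frac{n(n-1-\delta)}{n-\delta}$.

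Combining the two inequalities then yields
$$\kappa'(G)\ \ge\ n-\frac{n(n-1-\delta)}{n-\delta}\ =\ \frac{n(n-\delta)-n(n-1-\delta)}{n-\delta}\ =\ \frac{n}{n-\delta},$$
which is exactly the claimed bound. Since $\delta\le n-1$ we always have $n-\delta\ge 1$, so there is no division issue; the extreme case $G=K_n$, where $\delta=n-1$, gives the tight value $\kappa'(K_n)\ge n$.

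Alternatively one can argue in a self-contained way, mirroring the proof of Lemma \ref{lem:ubound}: take an accessing set $B$ with $|B|=\kappa'(G)$ and a witness $D\subseteq B$, $|D|=1\bmod 2$, $Odd_G(D)\subseteq B$; by the refined form of Lemma \ref{lem:dual}, $V\setminus B\subseteq Odd_{\overline G}(D)\subseteq\bigcup_{v\in D}\mathcal N_{\overline G}(v)$, hence $n-|B|\le |D|\cdot\Delta(\overline G)\le |B|\,(n-1-\delta)$, which rearranges to $|B|\ge \frac{n}{n-\delta}$. I do not expect a real obstacle here — this is essentially a corollary of the preceding two results. The only points needing a little care are correctly identifying $\Delta(\overline G)=n-1-\delta$ and checking the degenerate case $\Delta(\overline G)=0$ (i.e. $G$ complete), both of which are routine.
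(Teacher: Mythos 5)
Your proposal is correct and follows exactly the paper's own route: applying Theorem \ref{thm:dual} to get $\kappa'(G)\ge n-\kappa(\overline G)$ and then Lemma \ref{lem:ubound} to $\overline G$ with $\Delta(\overline G)=n-1-\delta$. The algebraic simplification matches the paper's computation, so there is nothing further to add.
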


\begin{proof} According to Theorem \ref{thm:dual}, $\kappa'(G)\ge n-\kappa(\overline G)$. Moreover, thanks to Lemma \ref{lem:ubound}, $n-\kappa(\overline G) \ge  n-\frac{n\Delta({\overline G})}{\Delta({\overline G})+1} =n- \frac{n(n-1-\delta(G))}{n-\delta(G)}= \frac{n}{n-\delta}$. 
\end{proof}

This bound is reached for regular graphs so that their complementary graph has a perfect code, more precisely:

\begin{theorem}\label{thm:kappap}
Given $G$ a $\delta$-regular graph s.t. $\frac {n}{n-\delta}$ is odd: $$\kappa'({G}) = \frac n {n-\delta} \iff \overline{G} \text{ has a perfect code} $$
\end{theorem}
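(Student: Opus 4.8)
The plan is to prove Theorem~\ref{thm:kappap} by essentially transporting the perfect-code characterization of $\kappa$ (Corollary~\ref{cor:kappa}) to $\kappa'$ via the duality of Theorem~\ref{thm:dual}, but one must be careful because Theorem~\ref{thm:dual} only gives the inequality $\kappa'(G)+\kappa(\overline G)\ge n$ in one direction. So the two implications require slightly different handling.

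\textbf{($\Leftarrow$).} Suppose $\overline G$ has a perfect code $C$. Since $G$ is $\delta$-regular, $\overline G$ is $(n-1-\delta)$-regular, so $\Delta(\overline G)=n-1-\delta$ and hence $|C|=\tfrac{n}{\Delta(\overline G)+1}=\tfrac{n}{n-\delta}$, which is odd by hypothesis. I claim $C$ itself is an accessing set in $G$ of this size: take $D=C$. Since $C$ is a perfect code in $\overline G$, $Odd_{\overline G}(C)=V\setminus C$, i.e. every vertex of $V\setminus C$ has exactly one neighbor in $C$ in the graph $\overline G$; equivalently every vertex $v\in V\setminus C$ has $|\mathcal N_{\overline G}(v)\cap C|=1$ and so, since $|C|$ is odd, $|\mathcal N_G(v)\cap C|=|C|-1-0=|C|-1$ is even, i.e. $v\notin Odd_G(C)$. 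Also $C$ is independent in $\overline G$, hence a clique in $G$, so for $v\in C$ we have $|\mathcal N_G(v)\cap C|=|C|-1$ even, so $v\notin Odd_G(C)$ either. Thus $Odd_G(C)=\emptyset\subseteq C$ and $|D|=|C|$ is odd, so $C$ is accessing in $G$; therefore $\kappa'(G)\le\tfrac n{n-\delta}$. The matching lower bound $\kappa'(G)\ge\tfrac n{n-\delta}$ is exactly the lemma just proven before the theorem, so $\kappa'(G)=\tfrac n{n-\delta}$.

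\textbf{($\Rightarrow$).} Suppose $\kappa'(G)=\tfrac n{n-\delta}$. By Theorem~\ref{thm:dual} applied to $\overline G$ (whose complement is $G$), $\kappa'(\overline G)+\kappa(G)\ge n$; but I instead want a bound relating $\kappa'(G)$ and $\kappa(\overline G)$. The cleanest route: let $B$ be an accessing set in $G$ with $|B|=\kappa'(G)=\tfrac n{n-\delta}$; by Lemma~\ref{lem:dual}, $V\setminus B$ is non-accessing in $\overline G$, so $\kappa(\overline G)\ge n-|B|=n-\tfrac n{n-\delta}=\tfrac{n(n-1-\delta)}{n-\delta}=\tfrac{n\,\Delta(\overline G)}{\Delta(\overline G)+1}$. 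Since the reverse inequality $\kappa(\overline G)\le\tfrac{n\Delta(\overline G)}{\Delta(\overline G)+1}$ always holds (Lemma~\ref{lem:ubound}), we get $\kappa(\overline G)=\tfrac{n\Delta(\overline G)}{\Delta(\overline G)+1}$. Now $\overline G$ is $(n-1-\delta)$-regular, hence $\Delta(\overline G)$-regular, so Corollary~\ref{cor:kappa} applies and yields that $\overline G$ has a perfect code.

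The main obstacle is the $(\Rightarrow)$ direction, and specifically making sure the duality step is set up so that Lemma~\ref{lem:ubound}'s upper bound on $\kappa(\overline G)$ really meets the lower bound coming from the accessing set of $G$ — this forces $\kappa(\overline G)$ to hit the extremal value $\tfrac{n\Delta}{\Delta+1}$ exactly, at which point Corollary~\ref{cor:kappa} is a black box. I should double-check that the odd-parity hypothesis on $\tfrac n{n-\delta}$ is genuinely needed in the $(\Leftarrow)$ direction (to conclude $Odd_G(C)=\emptyset$ rather than something larger) and note that it plays no role in $(\Rightarrow)$ beyond being part of the ambient assumption; also one should verify the arithmetic $n-\tfrac n{n-\delta}=\tfrac{n\Delta(\overline G)}{\Delta(\overline G)+1}$ with $\Delta(\overline G)=n-1-\delta$, which is the same computation already performed in the lemma preceding the theorem.
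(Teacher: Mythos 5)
Your proof is correct. The ($\Leftarrow$) direction is essentially the paper's argument, just with the parity computation showing $Odd_G(C)=\emptyset$ written out in full (the paper only asserts $Odd_G(C)\subseteq C$). The ($\Rightarrow$) direction, however, takes a genuinely different route. The paper works directly with the witness $D\subseteq B$ of the minimum accessing set: from $V\setminus B\subseteq Odd_{\overline G}(D)$ and a degree count it forces $|D|=\frac{n}{n-\delta}$ and then argues that $D$ itself is a perfect code of $\overline G$. You instead observe that $V\setminus B$ being non-accessing in $\overline G$ (Lemma \ref{lem:dual}) pushes $\kappa(\overline G)$ up to the extremal value $\frac{n\Delta(\overline G)}{\Delta(\overline G)+1}$ of Lemma \ref{lem:ubound}, and then invoke Corollary \ref{cor:kappa} as a black box since $\overline G$ is regular. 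Your version is more modular and avoids redoing the edge-counting argument (which in the paper is somewhat terse — one must also check $D$ is independent in $\overline G$); the paper's version buys slightly more information, namely that the perfect code can be taken to be the witness set $D=B$ itself. Your side remarks are also accurate: the oddness of $\frac{n}{n-\delta}$ is needed only in ($\Leftarrow$), to make $C$ a legal witness with $|D|=1\bmod 2$ and to get the parity of $|C|-1$ right, and the arithmetic identity $n-\frac{n}{n-\delta}=\frac{n\Delta(\overline G)}{\Delta(\overline G)+1}$ checks out with $\Delta(\overline G)=n-1-\delta$.
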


\begin{proof}
($\Leftarrow$) Let $C$ be a perfect code of $\overline G$. Since $|C|=\frac n{\Delta(\overline G)+1}=\frac n{n-\delta}=1~\bmod{2}$, $Odd_{{G}}(C)\subseteq C$, thus $C$ is an accessing set in ${G}$, so $ \kappa'({G})\le \frac n {n-\delta}$. Since $\kappa'(G) \ge \frac n {n-\delta}$ for any graph, $\kappa'(G) = \frac n {n-\delta}$\\
($\Rightarrow$) Let $B$ be an accessing set of size $\frac n{n-\delta}$ in ${G}$. $\exists D\subseteq B$ s.t. $|D|=1\bmod 2$ and $Odd_{{G}}(D)\subseteq B$. According to Lemma \ref{lem:dual},   $V\setminus B\subseteq Odd_{\overline G}(D)$, so $|Odd_{\overline G}(D)|\ge \Delta(\overline G)\frac{n }{n-\delta}$, which implies that  $|D|.\Delta(\overline G) \ge \Delta(\overline G)\frac{n}{n-\delta}$. As a consequence, $|D|= \frac n {n-\delta}$ and since every vertex of $V\setminus B$ (of size $\Delta(\overline G)\frac{n}{n-\delta}$) in $\overline G$ is connected to $D$, $D$ must be a perfect code. 
\end{proof}

We consider the problem {\bf KAPPA$'_\le$} (resp. {\bf KAPPA$'_\ge$}) which consists in deciding, given a graph $G$ and an integer $k\ge 0$, whether $\kappa'(G)\le k$ (resp. $\kappa'(G) \ge k$)?

\begin{theorem}
 ${\bf KAPPA'_\le}$ is NP-Complete.
\end{theorem}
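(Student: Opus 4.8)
The plan is to follow the same template as the proof that \textbf{KAPPA$_\ge$} is NP-complete, with Theorem \ref{thm:kappap} in the role that Corollary \ref{cor:kappa} played there. Membership in NP is immediate: a YES-certificate for $\kappa'(G)\le k$ is a set $D\subseteq V(G)$ with $|D|=1\bmod 2$; the verifier computes $Odd(D)$ and accepts iff $|D\cup Odd(D)|\le k$, which runs in polynomial time, and by Definition \ref{accessing} such a $D$ exists exactly when $G$ has an accessing set of size at most $k$, i.e.\ when $\kappa'(G)\le k$.

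For NP-hardness I would reduce from the problem of deciding whether a $3$-regular graph has a perfect code, which is NP-complete by \cite{perf_codes} and \cite{KMP}. The core observation is that if $H$ is $3$-regular on $N$ vertices then $G=\overline H$ is $(N-4)$-regular, so $N-\delta(G)=4$ and $\frac{N}{N-\delta(G)}=\tfrac N4$; hence, provided $\tfrac N4$ is odd, Theorem \ref{thm:kappap} applied to $G$ (whose complement is $H$) yields $\kappa'(\overline H)=\tfrac N4\iff H$ has a perfect code. Combining this with the general lower bound $\kappa'(\cdot)\ge\frac n{n-\delta}$ established above turns the left-hand side into $\kappa'(\overline H)\le\tfrac N4$, so $(\overline H,\tfrac N4)$ is exactly the \textbf{KAPPA$'_\le$} instance we want.

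The one delicate point --- and the only real obstacle --- is the parity hypothesis of Theorem \ref{thm:kappap}, that $\tfrac N4$ be odd. If a $3$-regular graph $H$ has a perfect code, then each codeword dominates itself and its three neighbours and these balls partition $V(H)$, so $N\equiv 0\bmod 4$ and the code has size $N/4$; but $N/4$ may be even. I would remove this obstruction by padding with $K_4$, which is $3$-regular with a perfect code of size $1$: since a perfect code of a disjoint union is exactly a union of perfect codes of the components, $H\sqcup K_4$ is $3$-regular, has a perfect code iff $H$ does, and has one of size $N/4+1$ when it does. Thus, on input a $3$-regular $H$ with $N$ vertices, the reduction outputs a fixed NO-instance such as $(K_1,0)$ when $N\not\equiv 0\bmod 4$ (valid since $\emptyset$ is not accessing, so $\kappa'(K_1)=1>0$), and otherwise sets $H'=H$ if $N/4$ is odd and $H'=H\sqcup K_4$ otherwise, lets $m=|V(H')|$ (so $m/4$ is an odd integer), and outputs $(\overline{H'},\,m/4)$.

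Correctness then follows by chaining equivalences: $H$ has a perfect code $\iff$ $H'$ has a perfect code $\iff$ $\kappa'(\overline{H'})=\tfrac m4$ by Theorem \ref{thm:kappap} (applicable since $\overline{H'}$ is regular and $\tfrac m4$ odd) $\iff$ $\kappa'(\overline{H'})\le\tfrac m4$ by the general lower bound; and the reduction is clearly polynomial-time, which completes the NP-hardness. Finally, the coNP-completeness of \textbf{KAPPA$'_\ge$} would follow by complementation, exactly as for \textbf{KAPPA$_\le$}.
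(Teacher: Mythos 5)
Your proposal is correct and follows essentially the same route as the paper: membership in NP via an explicit certificate, and NP-hardness by reducing from perfect codes in $3$-regular graphs via Theorem \ref{thm:kappap}, using the complement graph and a $K_4$ gadget to force the parity condition $\frac{N}{4}=1\bmod 2$. Your write-up is in fact slightly more careful than the paper's on two minor points (handling $N\not\equiv 0\bmod 4$ explicitly, and converting the equality $\kappa'(\overline{H'})=\frac m4$ into the inequality $\kappa'(\overline{H'})\le\frac m4$ via the general lower bound), but these are details rather than a different argument.
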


\begin{proof}
${\bf KAPPA'_\le}$ is in the class NP since an accessing set of size $k$ is a YES certificate.
For the completeness, given a 3-regular graph $G$,  if $\frac{n}{4}$ is odd then according to Theorem \ref{thm:kappap}, $G$ has a perfect code iff $\kappa'(\overline G)=\frac{n}{4}$. If   $\frac{n}{4}$ is even, we add a $K_4$ gadget to the graph $G$. Indeed, $G\cup K_4$ is a 3-regular graph and $\frac{n+4}{4}=\frac{n}{4}+1$ is odd. Moreover, $G$ has a perfect code iff $G\cup K_4$ has a perfect code iff $\kappa'(\overline {G\cup K_4})=\frac{n}{4}+1$. Since deciding whether a $3$-regular graph has a perfect code  is known to be NP complete, so is ${\bf KAPPA'_\le}$
\end{proof}

\begin{corollary}
{\bf KAPPA$'_\ge$} is co-NP-Complete.
\end{corollary}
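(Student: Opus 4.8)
The plan is to obtain this from the preceding theorem (NP-completeness of {\bf KAPPA$'_\le$}) by the standard complementation argument; no new combinatorial work is needed.

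First I would check membership in co-NP. A NO-instance $(G,k)$ of {\bf KAPPA$'_\ge$} is one with $\kappa'(G) < k$, i.e. $\kappa'(G)\le k-1$. A short certificate for this is a set $D$ with $|D| = 1 \bmod 2$ and $|D\cup Odd(D)| < k$: this witnesses that $B := D\cup Odd(D)$ is an accessing set of size strictly less than $k$, and the conditions that $|D|$ is odd, that $Odd_G(D)\subseteq B$, and that $|B| < k$ are all polynomial-time checkable. Hence {\bf KAPPA$'_\ge$} is in co-NP.

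For co-NP-hardness it suffices to reduce {\bf KAPPA$'_\le$} to the complement of {\bf KAPPA$'_\ge$}. The map $(G,k)\mapsto(G,k+1)$ does this: $\kappa'(G)\le k$ holds iff $\kappa'(G)\le (k+1)-1$ holds iff $(G,k+1)$ is a NO-instance of {\bf KAPPA$'_\ge$}. This map is clearly computable in polynomial time, so since {\bf KAPPA$'_\le$} is NP-complete the complement of {\bf KAPPA$'_\ge$} is NP-hard, i.e. {\bf KAPPA$'_\ge$} is co-NP-hard. The degenerate input $k=0$ causes no trouble, since $\kappa'(G)\ge 0$ always holds and the reduction above only ever produces instances with $k\ge 1$.

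The argument has no genuine difficulty; the only points requiring a little care are the off-by-one relation between the $\le$-threshold and the $\ge$-threshold, and the (immediate) observation that incrementing the integer parameter is a valid polynomial-time many-one reduction. Putting the two parts together, {\bf KAPPA$'_\ge$} is co-NP-complete.
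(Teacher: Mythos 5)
Your argument is correct and is exactly the standard complementation argument the paper intends (the corollary is stated without proof there): membership in co-NP via an accessing set of size less than $k$ as a NO-certificate, and co-NP-hardness by the parameter shift $(G,k)\mapsto(G,k+1)$ reducing {\bf KAPPA$'_\le$} to the complement of {\bf KAPPA$'_\ge$}. Nothing is missing.
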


We consider the problem {\bf QKAPPA} which consists in deciding, for a given graph $G$ and $k\ge 0$, whether $\kappa_Q(G)\le k$, i.e. $\kappa(G)\le  k$ and $\kappa'(G)\ge  n-k$?

\begin{theorem}
\label{NPkappaQ}
{\bf QKAPPA} is coNP-Complete.
\end{theorem}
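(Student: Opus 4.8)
The plan is to show that {\bf QKAPPA} lies in coNP by exhibiting short disqualifying certificates, and then to obtain coNP-hardness through a polynomial many-one reduction from {\bf KAPPA$'_\ge$}, which has just been shown to be coNP-complete.

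For membership, observe that the complement problem asks, given $(G,k)$ with $n=|V(G)|$, whether $\kappa_Q(G)>k$, equivalently whether $\kappa(G)\ge k+1$ or $\kappa'(G)\le n-k-1$. In the first case a witness is a non-accessing set $B$ with $|B|\ge k+1$ together with a set $C\subseteq V\setminus B$ such that $B\subseteq Odd(C)$ (Lemma \ref{non-accessing}); in the second case a witness is an accessing set $B$ with $|B|\le n-k-1$ together with $D\subseteq B$, $|D|$ odd, $Odd(D)\subseteq B$ (Definition \ref{accessing}). Each is checkable in polynomial time, so the complement is in NP and {\bf QKAPPA} is in coNP.

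For hardness, the key point is that iterating the disjoint-union construction forces the term $n-\kappa'$ to dominate in $\kappa_Q$. By equations (\ref{kappa_r}) and (\ref{kappa_p_r}), the graph $G^r$ has $rn$ vertices with $\kappa(G^r)=r\kappa(G)$ and $\kappa'(G^r)=\kappa'(G)$, hence
\[
\kappa_Q(G^r)=\max\bigl(r\,\kappa(G),\ rn-\kappa'(G)\bigr).
\]
For any graph on $n\ge 1$ vertices the whole vertex set $V(G)$ is accessing (take $D$ a single vertex), so $\kappa(G)\le n-1$, and trivially $\kappa'(G)\le n$. Choosing $r=n+1$ we therefore get $r\,(n-\kappa(G))\ge n+1>n\ge\kappa'(G)$, i.e.\ $r\,\kappa(G)\le rn-\kappa'(G)$, and consequently
\[
\kappa_Q\bigl(G^{n+1}\bigr)=n(n+1)-\kappa'(G).
\]
Now map an instance $(G,k_0)$ of {\bf KAPPA$'_\ge$} --- we may assume $0\le k_0\le n$, since otherwise it is trivially negative as $\kappa'(G)\le n$ --- to the instance $\bigl(G^{n+1},\,n(n+1)-k_0\bigr)$ of {\bf QKAPPA}; this is computable in polynomial time (as $G^{n+1}$ has $n(n+1)$ vertices) and $n(n+1)-k_0\ge 0$. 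By the identity above,
\[
\kappa_Q\bigl(G^{n+1}\bigr)\le n(n+1)-k_0\ \iff\ \kappa'(G)\ge k_0,
\]
so the two instances are equivalent. Since {\bf KAPPA$'_\ge$} is coNP-complete, {\bf QKAPPA} is coNP-hard, and together with membership it is coNP-complete.

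The only genuinely nontrivial step is recognizing that passing to $G^{n+1}$ (any exponent $r>\kappa'(G)$ works, and $r=n+1$ is a safe efficiently computable choice) swings the maximum defining $\kappa_Q$ onto the $n-\kappa'$ term; everything else is bookkeeping and a one-line equivalence.
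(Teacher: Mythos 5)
Your proof is correct and follows essentially the same route as the paper: membership via the natural NO-certificates, and coNP-hardness by reducing {\bf KAPPA$'_\ge$} through the disjoint-union power $G^r$, using equations (\ref{kappa_r}) and (\ref{kappa_p_r}) to make the $\kappa$-term of $\kappa_Q$ harmless so that only the $\kappa'$-condition survives. The only difference is cosmetic --- you take $r=n+1$ (and observe the max is attained by the $rn-\kappa'$ term) where the paper takes $r=k$; both choices work for the same reason.
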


\begin{proof}
{\bf QKAPPA} is co-NP since a non accessing set of size $k-1$ or an accessing set of size $n-k+1$ is a NO certificate. For the completeness, we use a reduction to the problem {\bf KAPPA$'_\ge$}. Given a graph $G$ and any $k\ge0$,  $\kappa_G(G^k)\le kn-k$ iff $(\kappa(G^{k})\le kn-k \wedge \kappa'(G^k)\ge  k) \iff (\kappa(G^{k})\le kn-k \wedge \kappa'(G^k)\ge k) \iff  (\kappa(G)\le  n-1 \wedge \kappa'(G)\ge k )$. Since for any graph $G$, $V(G)$ is an accessing set, $\kappa(G)\le n-1$, thus $k-1$ is a threshold for the protocol $G^k$ iff $\kappa'(G)\ge k$. As a consequence,  {\bf QKAPPA} is coNP-Complete. 
\end{proof}

\section{Conclusion}

In this paper, we have studied the quantities $\kappa$, $\kappa'$ and $\kappa_Q$ that can be computed on graphs.
They correspond to the extremal cardinalities accessing and non-accessing sets can reach.
These quantities present strong connexions with quantum information theory and the graph state formalism, and especially in the field of quantum secret sharing.

Thus, we have studied and computed these quantities on some specific families of graphs, and we deduced they are candidates for good threshold quantum secret sharing protocols.
Then we have proven the NP-completeness of the decision problems associated with $\kappa$, $\kappa'$ and $\kappa_Q$.

A related question is still open: is the problem of deciding whether the minimal degree up to local complementation is greater than $k$ NP-complete?
This problem seems very close to finding $\kappa'$ since it consists in finding the smallest set of vertices of the form $D \cup Odd(D)$ with $D \neq \varnothing$, without the constraint of parity $|D|=1 \bmod 2$ as for $\kappa'$.

\bibliographystyle{plain}
\bibliography{Kappa}

\end{document}